\documentclass[11pt]{article}

\usepackage{graphicx}
\usepackage{cite}
\usepackage{color}
\usepackage{amsfonts}
\usepackage{amsthm}
\usepackage{url}
\usepackage{amsmath}
\usepackage{amssymb}
\usepackage{enumerate}
\sloppy
\usepackage{subfig}
\usepackage{fixltx2e}
\usepackage{dsfont}
\usepackage{pseudocode}
\usepackage{mathrsfs}

\newtheorem{theorem}{Theorem}
\newtheorem{lemma}{Lemma}

\newcommand{\EE}{\mathbb{E}}
\newcommand{\R}{\mathbb{R}}

\newcommand{\mA}{\mathsf{A}}

\renewcommand{\to}{\longrightarrow}
\usepackage{eufrak}
\graphicspath{{figs/}}
\usepackage[numbers]{natbib}

\usepackage{fullpage}
\title{Transportation Proof of an inequality by  Anantharam, Jog and Nair}
\author{Thomas~A.~Courtade\\University of California, Berkeley}
\date{January 30, 2019}
\begin{document}

\maketitle 

\begin{abstract}
Anantharam, Jog and Nair recently put forth an entropic inequality which simultaneously generalizes  the Shannon-Stam entropy power inequality and the Brascamp-Lieb inequality in entropic form. We give a brief proof of their result based on optimal transport. 
\end{abstract}

\section{Introduction}
Let $(c_1, \dots, c_k)$ and $(d_1, \dots, d_m)$ be nonnegative numbers, and let $(\mA_1, \dots, \mA_m)$ be a collection of surjective linear transformations identified as matrices, satisfying $\mA_j : \R^n \to \R^{n_j}$ for $j=1,\dots, m$.  Let $\mathcal{S}^{+}(\R^n)$ denote the set of $n\times n$ real symmetric, positive definite matrices, and define 
$$
M_g := \sup_{B_1, \dots B_k: B_i \in \mathcal{S}^{+}(\R^{r_i})}  \frac{1}{2} \sum_{i=1}^k c_i   \log  \det( B_i ) -   \frac{1}{2}\sum_{j=1}^m d_j   \log  \det( \mA_j B \mA_j^T ),
$$
where $B:=\operatorname{diag}(B_1, \dots, B_k)$.   For a random vector $X$ in $\R^n$ with density $f$ with respect to Lebesgue measure, we define the Shannon (differential) entropy according to
$$
h(X) := -\int_{\R^n} f(x) \log f(x) dx,
$$
and say that the entropy exists if the defining integral exists  in the Lebesgue sense and is finite. 

Anantharam, Jog and Nair recently established the following result:
\begin{theorem}[{\cite[Theorem 3]{anantharam2019unifying}}]\label{thm:BLEPI}
Let $X$ be a random vector in $\R^n$ that  can be partitioned into $k$ mutually independent components $X = (X_1, \dots , X_k)$, where each $X_i$ is a random vector in $\R^{r_i}$, and $\sum_{i=1}^k r_i  = n$. If $X$ has finite entropy and second moments, then letting the above notation prevail, 
$$
\sum_{i=1}^k c_i h(X_i) - \sum_{j=1}^m d_j h(\mA_j X) \leq M_g.
$$
\end{theorem}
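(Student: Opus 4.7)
The plan is to apply Brenier's theorem to transport each independent component $X_i$ to a Gaussian, exploit the block-diagonal structure afforded by the independence hypothesis, and thereby reduce the entropic inequality to a Jacobian comparison that is ultimately controlled by the matrix inequality underlying the definition of $M_g$.

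Fix an arbitrary tuple $B_1,\dots,B_k$ with $B_i \in \mathcal{S}^+(\R^{r_i})$, and set $B := \operatorname{diag}(B_1,\dots,B_k)$. Since each $X_i$ has finite entropy and second moments, Brenier's theorem provides a convex $\phi_i$ such that $T_i := \nabla\phi_i$ pushes the law of $X_i$ forward to $N(0,B_i)$, and $\nabla T_i(x_i)$ is symmetric positive semidefinite a.e. By the independence of the components, the product map $T(x) := (T_1(x_1),\dots,T_k(x_k))$ pushes $X$ forward to $N(0,B)$ and its Jacobian is block-diagonal, namely $\nabla T(x) = \operatorname{diag}(\nabla T_1(x_1),\dots,\nabla T_k(x_k))$. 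Similarly, let $U_j$ be the Brenier map from the law of $\mA_j X$ to $N(0, \mA_j B \mA_j^T)$.

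Applying the change-of-variables formula for differential entropy yields
$$
h(X_i) = \tfrac{1}{2}\log\det(2\pi e B_i) - \EE\log\det \nabla T_i(X_i),
\qquad
h(\mA_j X) = \tfrac{1}{2}\log\det(2\pi e\, \mA_j B \mA_j^T) - \EE\log\det \nabla U_j(\mA_j X).
$$
Substituting these into the left-hand side and invoking the scaling condition $\sum_i c_i r_i = \sum_j d_j n_j$ (which is automatic when $M_g < \infty$), the $\log(2\pi e)$ terms cancel, and the remaining deterministic log-det part is at most $M_g$ by the definition of the supremum. Thus the theorem is equivalent to the Jacobian inequality
$$
\EE\sum_j d_j \log\det \nabla U_j(\mA_j X) \;\leq\; \EE\sum_i c_i \log\det \nabla T_i(X_i).
$$

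The principal obstacle is this Jacobian comparison. The key observation is that both $\mA_j \circ T$ and $U_j \circ \mA_j$ push $X$ forward to the same Gaussian $N(0, \mA_j B \mA_j^T)$; combined with the monotonicity (gradient-of-convex structure) of the Brenier map $U_j$, this yields a pointwise comparison of $\nabla U_j(\mA_j x)$ with the induced Jacobian block $\mA_j \nabla T(x) \mA_j^T$ modulo the Gaussian normalization $\mA_j B \mA_j^T$. Once such a comparison is in hand, applying the matrix inequality that defines $M_g$ pointwise to the PSD block-diagonal matrix $\nabla T(x)$ (with the realized blocks $\nabla T_i(x_i)$ playing the role of the $B_i$ variables) closes the argument. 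The delicate step is the Brenier comparison on the $\mA_j$-image, since $U_j$ and the restriction of $\mA_j T$ to the fibers of $\mA_j$ are distinct maps that only coincide in distribution; this is where the convexity of the potentials $\phi_i, \psi_j$ and the block-diagonal structure of $\nabla T$ must be used in tandem.
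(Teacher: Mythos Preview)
Your reduction is set up correctly through the change-of-variables identities, but the argument has a genuine gap at precisely the step you yourself flag as ``delicate'': the Jacobian comparison
\[
\EE\sum_j d_j \log\det \nabla U_j(\mA_j X) \;\leq\; \EE\sum_i c_i \log\det \nabla T_i(X_i)
\]
is never established. The observation that $\mA_j\circ T$ and $U_j\circ \mA_j$ push $X$ forward to the same Gaussian is a statement about laws, not about pointwise Jacobians, and convexity of the Brenier potentials by itself does not give a pointwise inequality relating $\nabla U_j(\mA_j x)$ to $\mA_j\,\nabla T(x)\,\mA_j^T$. Without such a relation you cannot feed the blocks $\nabla T_i(x_i)$ into the definition of $M_g$ and close the loop. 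As written, the proposal asserts the existence of this comparison without supplying the mechanism, so the proof is incomplete at its central step.

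The paper sidesteps this difficulty entirely by working in the opposite direction: it transports the standard Gaussian $\tilde Z$ \emph{to} $X$ via a single Brenier map $T$, and then proves a one-line lemma giving a \emph{lower bound}
\[
h(\mA_j X)\;\geq\; h(Z'_j)+\tfrac12\,\EE\log\det\bigl(\mA_j(\nabla T(\tilde Z))^2\mA_j^T\bigr)
\]
directly in terms of $\nabla T$. The mechanism there is a QR decomposition of $\mA_j^T$ together with the inequality ``conditioning reduces entropy,'' which produces the desired estimate without ever introducing auxiliary maps $U_j$. Because everything is expressed through the single block-diagonal Jacobian $\nabla T$, the pointwise application of the $M_g$ supremum is immediate. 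Your scheme of pushing $X$ to a Gaussian with an arbitrary fixed covariance $B$ and then trying to compare separately constructed Brenier maps on each projection is not obviously repairable; the missing comparison is doing real work that, in the paper's approach, is absorbed by the conditioning step.
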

As discussed in detail in \cite{anantharam2019unifying}, this result contains as special cases both the Shannon-Stam entropy power inequality \cite{bib:Shannon48, stam1959some} (in Lieb's form \cite{lieb1978proof}), and the Brascamp-Lieb inequality  (in entropic form, due to Carlen and Cordero-Erasquin \cite{carlen2009subadditivity}).  

Anantharam \emph{et al.}'s proof of Theorem \ref{thm:BLEPI} is based on a doubling  argument applied to information measures, following the scheme developed in \cite{GengNair} by Geng and Nair.  This doubling-trick for proving Gaussian optimality  goes back at least to  Lieb's original proof of the Brascamp-Lieb inequality \cite{lieb1990gaussian}, but the Geng-Nair interpretation in the context of information measures has enjoyed recent popularity in information theory (e.g., \cite{courtade2018strong, sula2018sum, liu2018forward}). 
The contribution in the present note is to give a brief proof of Theorem \ref{thm:BLEPI} based on optimal transport.  It has the advantage of being considerably shorter than the doubling proof in \cite{anantharam2019unifying}.   Interestingly, the proof here also seems  to be simpler than Barthe's transport proof of the Brascamp-Lieb inequality \cite{barthe1998reverse}.  However, Barthe's argument and the proof contained herein are  not truly comparable on account of  the following caveats:  (i) Theorem \ref{thm:BLEPI} implies the entropic form of the Brascamp-Lieb inequality, so some work is required to recover the functional form; and (ii) Barthe's argument simultaneously establishes a reverse form of the Brascamp-Lieb inequality (i.e., Barthe's inequality), and further gives a precise relationship between best constants in the forward and reverse inequalities. 

\section{Proof of Theorem \ref{thm:BLEPI}}
The key lemma is the following change-of-variables estimate, inspired by Rioul and Zamir's recent proof \cite{rioul2019equality} of  the Zamir-Feder entropy power inequality \cite{zamir1993generalization} (which also follows from Theorem \ref{thm:BLEPI}, as noted in \cite{anantharam2019unifying}).    We remark that other applications of optimal transport to entropy power inequalities can be found in \cite{rioul2017yet, rioul2017optimal, courtade2018quantitative}.  Readers are referred to \cite{villani2003topics} for background on optimal transport. 
\begin{lemma}\label{lem:ChgOfVar}
Let $\tilde{Z} \sim N(0,I)$ be a standard normal random variable in $\R^n$, and let $\mA :\R^n \to \R^m$ be a surjective linear map. 
Let $X$ be a random vector in $\R^n$, and let $T:\R^n\to \R^n$ be the Brenier map sending $\tilde{Z}$ to $X$.  If $T$ is differentiable with pointwise positive definite Jacobian $\nabla T$, then 
$$
h(\mA X) \geq h(Z) + \frac{1}{2}\EE\log \det (\mA  ( \nabla T(\tilde Z) )^2   \mA^T ),
$$
where $Z$ is standard normal on $\R^m$. 
\end{lemma}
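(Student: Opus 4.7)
My plan is to reduce the claim to a canonical form and then exploit the coarea formula together with the Brenier structure of $T$.

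First, reduce to $\mA = [I_m, 0]$: factor $\mA = LP$ with $L$ an invertible $m\times m$ matrix and $PP^T = I_m$ (both sides of the claim then shift by $\log|\det L|$), and apply an orthogonal change of variables $\tilde Z \mapsto Q\tilde Z$, under which the Brenier potential transforms by $\varphi \mapsto \varphi \circ Q^T$. Writing $T = \nabla\varphi$, $\tilde Z = (\tilde Z_1, \tilde Z_2)$, $A := \partial_{11}\varphi$, and $B := \partial_{12}\varphi$, the claim becomes $h(Y) \geq h(Z) + \tfrac{1}{2}\EE\log\det(A^2 + BB^T)$, where $Y = \partial_1\varphi(\tilde Z)$.

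Next, by convexity of $\varphi(\cdot, z_2)$ the map $F(z_1, z_2) := (\partial_1\varphi(z_1, z_2), z_2)$ is a bijection of $\R^n$ with Jacobian $\det A > 0$, so the change-of-variables formula gives $h(Y, \tilde Z_2) = h(\tilde Z) + \EE\log\det A$. Combined with the chain rule $h(Y, \tilde Z_2) = h(Y) + h(\tilde Z_2 \mid Y)$ and $h(\tilde Z_2) = \tfrac{n-m}{2}\log(2\pi e)$, this recasts the claim as the mutual-information lower bound
\[
I(\tilde Z_2; Y) \;\geq\; \tfrac{1}{2}\EE\log\det(I_{n-m} + B^T A^{-2} B),
\]
which in the Gaussian-linear case (with $A, B$ deterministic and $Y = A\tilde Z_1 + B\tilde Z_2$) holds with equality as the capacity of the associated Gaussian channel.

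Proving this mutual-information inequality in the general nonlinear case is the crux and the main obstacle I anticipate. My plan is to write the density of $Y$ via the coarea formula as $g_Y(y) = \int \phi_m(z_1(y,\tilde z_2))\phi_{n-m}(\tilde z_2)/|\det A|\,d\tilde z_2$, apply Gaussian maximum entropy to $\tilde Z_2 \mid Y$, and reduce matters to the covariance bound $\EE\log\det\Cov(\tilde Z_2 \mid Y) \leq -\EE\log\det(I_{n-m} + B^T A^{-2}B)$. The symmetry and positive-definiteness of $\nabla^2\varphi$ are essential here: they endow $B^T A^{-2}B$ with the interpretation of a Fisher-information-like quantity, which a Bayesian Cram\'er--Rao-style argument should exploit to close the gap.
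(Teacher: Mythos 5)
Your reduction to $\mA = [I_m,\,0]$ and the exact identity $h(Y) = h(Z) + \EE\log\det A + I(\tilde Z_2;Y)$ are both correct, and up to that point you are on the same path as the paper: it takes the QR decomposition $\mA^T = Q_1R_1$, couples $\tilde Z = Q_1Z+Q_2Z'$, and applies the entropic change of variables to $z\mapsto \mA T(Q_1z+Q_2z')$ for fixed $z'$, which in your coordinates is precisely $z_1\mapsto\partial_1\varphi(z_1,z_2)$. The divergence comes immediately after. The paper discards the mutual information term entirely (``conditioning reduces entropy'' is exactly $I(\tilde Z_2;Y)\ge 0$) and then converts $\EE\log\det A$ into $\tfrac12\EE\log\det(A^2+BB^T)$ by a pointwise linear-algebra assertion (that the squared spectrum of $Q_1^T\nabla T Q_1$ equals the spectrum of $Q_1^T(\nabla T)^2Q_1$). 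You instead retain $I(\tilde Z_2;Y)$ and must establish the lower bound $I(\tilde Z_2;Y)\ge\tfrac12\EE\log\det(I+B^TA^{-2}B)$. You correctly identify this as the crux, but you do not prove it, and it is not a routine verification: it is a strict, quantitative strengthening of ``conditioning reduces entropy'' that holds with equality only in the Gaussian-linear case.

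The plan you sketch for closing this gap also points the wrong way. Gaussian maximum entropy reduces matters to the \emph{upper} bound $\EE\log\det\Cov(\tilde Z_2\mid Y)\le-\EE\log\det(I+B^TA^{-2}B)$, but Bayesian Cram\'er--Rao/van Trees arguments deliver \emph{lower} bounds on posterior covariance. The tool that upper-bounds posterior variance is the Brascamp--Lieb variance inequality for log-concave conditionals, and invoking it here requires a pointwise lower bound on $-\nabla^2_{z_2}\log p(z_2\mid y)$ that involves third derivatives of $\varphi$ and does not visibly collapse to $I+B^TA^{-2}B$; there is also a mismatch between the expectation over $Y$ on the left and over $\tilde Z$ on the right. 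I will add that your identity is a genuinely useful diagnostic when held against the paper's own proof: it shows that the pointwise determinant identity $(\det A)^2=\det(A^2+BB^T)$ used there and the mutual-information bound you need are the two competing ways of accounting for the off-diagonal block $B$, and a $2\times 2$ check with $\nabla T=\operatorname{diag}(1,2)$ and $Q_1=(\cos\theta,\sin\theta)^T$ shows the determinant identity fails whenever $B\neq 0$ --- so the burden really does fall on an inequality of the type you have isolated. As written, your proposal sets the problem up correctly and locates the difficulty precisely, but it does not resolve it.
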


\begin{proof}
Consider the QR decomposition of $\mA^T = QR =  [Q_1,Q_2][R_1^T, 0]^T$, where  $Q$ is an orthogonal $n\times n$ matrix,  and $R_1$ is an upper triangular $m\times m$ matrix, with positive entries on the diagonal. 
Let $Z'$ be standard normal on $\R^{n-m}$, independent of $Z$, and note that $\tilde{Z} = Q_1 Z + Q_2 Z'$ is a valid coupling.  Now, for fixed $z'$, the map $z\in \R^m  \to \mA  T( Q_1 z + Q_2 z')\in \R^m $  is invertible and differentiable.  Differentiability follows from our assumption on $T$, and invertibility follows by writing $T = \nabla \varphi$ for strictly convex $\varphi$ (Brenier's theorem with the positivity assumption), and noting that the map  $z  \to Q_1^T  \nabla \varphi( Q_1 z + Q_2 z')$ is the gradient of the strictly convex function $z  \to   \varphi( Q_1 z + Q_2 z')$, and is therefore invertible.  So, we have
\begin{align}
h(\mA X) = h(\mA  T(\tilde Z ))
&=h(\mA  T( Q_1 Z + Q_2 Z' ) ) \label{eqTransport}\\
&\geq h(\mA  T( Q_1 Z + Q_2 Z') | Z' ) \label{eq:condRedEntropy}\\
&=h(Z) + \EE\log  \det( Q_1^T \nabla T(\tilde Z)   Q_1) + \log \det R_1 \label{eq:CoV1}\\
&=h(Z) + \frac{1}{2}\EE\log \left[( \det( Q_1^T \nabla T(\tilde Z)  Q_1) )^2\right]+ \log \det R_1 \notag\\
&=h(Z) + \frac{1}{2}\EE\log  \det( Q_1^T  (\nabla T(\tilde Z))^2  Q_1) + \log \det R_1 \label{specSquared}\\
&=h(Z) + \frac{1}{2}\EE\log  \det( \mA   (\nabla T(\tilde Z))^2  \mA^T ).\notag
\end{align}
Above, \eqref{eqTransport} follows since $X = T(\tilde{Z})$ in distribution; \eqref{eq:condRedEntropy} follows from the fact that conditioning reduces entropy; \eqref{eq:CoV1} is the change of variables formula for entropy; and \eqref{specSquared} follows since the squared spectrum of $Q_1^T \nabla T(\tilde z)  Q_1$ is equal to the spectrum of $Q_1^T (\nabla T(\tilde z))^2  Q_1$ for each $\tilde{z}$ by symmetry of $\nabla T$ (an easy exercise, e.g., seen by diagonalizing $\nabla T(\tilde z)$). 
\end{proof}

Now, we begin the proof of Theorem \ref{thm:BLEPI}.    Without loss of generality, we may assume the density of each $X_i$ is smooth, bounded and strictly positive.  Indeed, if this is not the case, then we first regularize the density of $X$ via convolution with a Gaussian density.  The general claim then follows by continuity in the limit of vanishing regularization, which is valid provided entropies and second moments are finite (e.g.,  \cite[Lemma 1.2]{carlen1991entropy}).    

By dimensional analysis, a necessary condition for $M_g<\infty$ is that  $\sum_{i=1}^k c_i r_i = \sum_{j=1}^m d_j n_j$.  So, we make this assumption henceforth.  Now, let $Z = (Z_1, \dots, Z_k)$ be independent, standard normal random vectors with $Z_i\in \R^{r_i}$, and let $T_i:\R^{r_i}\to \R^{r_i}$ be the Brenier map sending $Z_i$ to $X_i$.    Define $T = (T_1, \dots, T_k)$, which is the Brenier map transporting $Z$ to $X$ by the independence assumption.  We remark that each $T_i$ is differentiable, with $\nabla T_i$ being   pointwise symmetric and positive definite.  This follows from  Brenier's Theorem \cite{brenier1991polar} and regularity estimates for the Monge-Amp\`ere equation under our assumption that the densities of the $X_i$'s are smooth with full support  \cite[Remark 4.15]{villani2003topics}. 

So, by Lemma \ref{lem:ChgOfVar}, we have
\begin{align}
\sum_{j=1}^m d_j h(\mA_j X) &\geq \sum_{j=1}^m d_j h(Z'_j) + \frac{1}{2}\sum_{j=1}^m d_j  \EE\log  \det( \mA_j   (\nabla T(Z))^2  \mA_j^T ),
\end{align}
where $Z'_j$ is standard normal on $\R^{n_j}$.  
By the change of variables formula,   $h(X_i) =  h(Z_i) +    \EE\log  \det(   \nabla T_i(Z_i) )$ for each $i=1, \dots, k$, so summing terms gives
$$
\sum_{i=1}^k c_i h(X_i) = \sum_{i=1}^k c_i h(Z_i) + \frac{1}{2} \sum_{i=1}^k c_i  \EE\log  \det(   (\nabla T_i(Z_i))^2   ).
$$
By the relation $\sum_{i=1}^k c_i r_i = \sum_{j=1}^m d_j n_j$, we have $\sum_{i=1}^k c_i h(Z_i) =  \sum_{j=1}^m d_j h(Z'_j)$.  Hence, on combining the above estimates, we have
\begin{align*}
\sum_{i=1}^k c_i h(X_i) - \sum_{j=1}^m d_j h(\mA_j X) &\leq \frac{1}{2} \sum_{i=1}^k c_i  \EE\log  \det(   (\nabla T_i(Z_i))^2    ) -   \frac{1}{2}\sum_{j=1}^m d_j  \EE\log  \det( \mA_j   (\nabla T(Z))^2   \mA_j^T )\\
&\leq M_g,
\end{align*}
where the last line follows by definition of $M_g$ (applied pointwise inside the expectation).

\bibliographystyle{unsrt}
\bibliography{BLEPI}

\end{document}